\newtheorem{theorem}{Theorem}
\newtheorem{definition}[theorem]{Definition}
\newtheorem{problem}[theorem]{Problem}
\newtheorem{claim}[theorem]{Claim}
\DeclareMathOperator{\E}{\mathbb{E}}
\date{}
\title{Minimizing Cost Rather Than Maximizing Reward \\ in Restless Multi-Armed Bandits}
\author{
    R. Teal Witter \\
    New York University \\
    \texttt{\href{mailto:rtealwitter@nyu.edu}{rtealwitter@nyu.edu}}
    \and
    Lisa Hellerstein \\
    New York University \\
    \texttt{\href{mailto:lisa.hellerstein@nyu.edu}{lisa.hellerstein@nyu.edu}}
}
\begin{document}

\maketitle

\begin{abstract}
Restless Multi-Armed Bandits (RMABs) offer a powerful framework for solving resource constrained maximization problems. However, the formulation can be inappropriate for settings where the limiting constraint is a reward threshold rather than a budget. We introduce a constrained minimization problem for RMABs that balances the goal of achieving a reward threshold while minimizing total cost. We show that even a bi-criteria approximate version of the problem is PSPACE-hard. Motivated by the hardness result, we define a decoupled problem, indexability and a Whittle index for the minimization problem, mirroring the corresponding concepts for the maximization problem. Further, we show that the Whittle index for the minimization problem can easily be computed from the Whittle index for the maximization problem. Consequently, Whittle index results on RMAB instances for the maximization problem give Whittle index results for the minimization problem.  Despite the similarities between the minimization and maximization problems, solving the minimization problem is not as simple as taking direct analogs of the heuristics for the maximization problem. We give an example of an RMAB for which the greedy Whittle index heuristic achieves the optimal solution for the maximization problem, while the analogous heuristic yields the worst possible solution for the minimization problem. In light of this, we present and compare several heuristics for solving the minimization problem on real and synthetic data. Our work suggests the importance of continued investigation into the minimization problem.
\end{abstract}

%

\section{Introduction}

Restless Multi-Armed Bandits (RMABs) are a powerful
tool for modeling sequential decision-making problems
under resource constraints.
RMABs model a setting where an agent must choose some number of actions at each time step.
Each action incurs a cost and yields a stochastic reward depending on the state of the environment.
Traditionally, the agent's goal is to maximize the reward subject to a budget constraint.

Since exactly solving RMABs is computationally hard \cite{papadimitriou1994complexity}, a common heuristic is to assign a value---called the Whittle index---for each possible action.
Then the heuristic greedily selects actions with the largest Whittle indices.
This Whittle index heuristic has been successfully applied in a variety of domains including healthcare engagement \cite{mate2022field,biswas2021learn,killian2021beyond}, anti-poaching \cite{qian2016restless}, and sustainable energy \cite{iannello2012optimality}.

A limitation of the maximization formulation of RMABs is that it may not be appropriate for settings with a variable budget where the primary goal is to achieve a certain amount of reward.
We give several examples of such settings:

\paragraph{Wildlife Conservation}
A nonprofit organization seeks to rescue an endangered species by re-introducing captive-bred individuals into wild areas.
Each area has a different set of environmental conditions such as food availability, predator abundance, human proximity, and habitat quality.
There is a cost associated with re-introducing the species into each area and a stochastic reward associated with the number of subsequent offspring born from the re-introduced individuals.
The goal is to re-introduce the species into several areas at minimum cost so that a certain number of individuals are born in the wild.
A solution to the maximization problem may require abandoning the project once the budget is exceeded even if the species is still critically endangered.

\paragraph{Energy Management}
A company seeks to reduce their carbon footprint by cutting their energy use.
There are several energy-saving measures they can take such as installing solar panels, upgrading to smart appliances, using motion-sensing lights, and moderating temperature settings.
Each measure has a different cost and a stochastic reward depending on the reliability of the measure and user behavior.
The goal is to reduce energy consumption by a certain amount at minimum cost.
A solution to the maximization problem may require abandoning energy-saving measures once the budget is exceeded even if energy use remains high.

\paragraph{Healthcare}
A medical team seeks to provide care to a sick patient.
There are different treatments they can provide such as surgery, medication, physical therapy, and counseling.
Each treatment has a different cost and a stochastic reward depending on the patient's condition and the treatment's effectiveness and reliability.
The goal is to provide care at minimum cost so that the patient stabilizes to a healthy state.
A solution to the maximization problem may require abandoning treatment once the budget is exceeded even if the patient is still sick.

In \cite{diaz2023flexible}, the authors consider a more flexible version of the maximization problem where the budget is aggregated over multiple time steps.
While their work is a step in the right direction, it still necessitates a hard budget constraint.
In fact, if their aggregated budget is exceeded early, the results may be even worse since they are unable to take actions for the rest of the aggregated period.

The solution we propose is the minimization problem:
The goal of the minimization problem is to achieve a certain amount of reward while minimizing total cost.
In the wildlife conservation example, the minimization problem requires re-introducing individuals at minimum cost until a certain number of offspring are born in the wild.
In the energy management example, the minimization problem requires becoming more energy efficient at minimum cost until energy usage is reduced by a certain amount.
In the healthcare example, the minimization problem requires providing care at minimum cost until the patient recovers to their previous baseline health.
While not applicable to all settings, the minimization problem is more appropriate than the maximization problem when the primary goal is to achieve a certain amount of reward.

\subsection{Our Contributions}

Our first contribution is the formulation of the
minimization problem for RMABs.
Since solving the minimization problem exactly is computationally intractable, we introduce a bi-criteria approximation problem.
We then show that even finding a bi-criteria approximation within any approximation factor is PSPACE-hard.
As a result, if PSPACE $\neq$ P, there are no polynomial-time algorithms which can provably solve the bi-criteria approximation problem within any approximation factor.

Given the computational hardness result, our second contribution is the decoupling of the minimization problem.
Analogous to the maximization problem, we introduce a decoupled problem, a notion of indexability, and a Whittle index for the minimization problem.

Our third contribution is a comparison between the minimization and maximization problems.
We show that the indexability of the maximization problem implies the indexability of the minimization problem, and vice versa.
Further, we show a simple relationship between the Whittle index of the maximization problem and the Whittle index of the minimization problem. 
It then follows that existing results on the Whittle index for the maximization problem give the Whittle index for the minimization problem.
While the minimization and maximization problems are similar in many ways, algorithms designed for the maximization problem do not necessarily perform well on the minimization problem:
We present an RMAB instance where the standard heuristic for the maximization problem gives the optimal strategy but the analogous heuristic for the minimization problem gives the worst possible strategy.

Inspired by the need for new heuristics, our fourth contribution is the development of two heuristics for the minimization problem, inspired by prior work: an increasing budget heuristic and a truncated reward heuristic.
We compare the heuristics on anonymized patient data from the National Inpatient Sample and synthetic data generated from a well-studied hidden two-state RMAB for which the Whittle index is known exactly in closed form \cite{le2008multi,liu2010indexability}.

\subsection{Related Work}

\paragraph{Restless Multi-Armed Bandits}

A line of recent work has studied RMABs
for the problem of promoting patient engagement.
In \cite{mate2022field}, the authors present the results of
a field study on the impact of an RMAB solution to
maternal and child health.
Several papers have also studied the problem of approximating
the Whittle index for RMABs when the transition function is
not known in advance \cite{biswas2021learn,wang2023optimistic}.
In \cite{killian2021beyond}, they extend the RMAB instance
to the case where there are more than two possible actions.
In \cite{wan2015weighted}, they study the maximization problem within the more general setting of non-negative costs.
We similarly consider the setting with non-negative costs but in the context of the minimization problem we introduce.
RMABs have also been studied in the context of 
anti-poaching \cite{qian2016restless}
and sustainable energy \cite{iannello2012optimality}.

\paragraph{Restless Bandits and Exact Whittle Index}

There is a large body of work on the Whittle index for
various RMAB instances.
For (special cases of) the following RMAB instances,
the Whittle index is known in closed form:
a hidden two-state Markov chain where the state
is only learned if the chain is activated
\cite{liu2010indexability,le2008multi},
a two-state Markov chain where the state is always unknown
\cite{meshram2018whittle},
several variants of the age of information problems
\cite{hsu2018age,tripathi2019whittle},
collapsing bandits \cite{mate2020collapsing},
and crawling websites for ephemeral content
\cite{avrachenkov2016whittle}.

\paragraph{Q-Learning Whittle Index}

Often, the transition function is not known in advance
or is too complex to compute exactly.
In this case, 
establishing indexability and deriving closed form equations
for the Whittle index can be challenging.
A line of recent work has 
used Q-learning to approximate Whittle indices
\cite{fu2019towards,nakhleh2021neurwin,killian2021q,avrachenkov2022whittle,robledo2022qwi}.
Q-learning is a model-free reinforcement learning algorithm that learns the value of taking an action in a particular state.
We use Q-learning in our real data experiments to approximate the Whittle index.

\paragraph{General Guarantees of the Whittle Index}

While there are no general guarantees for the optimality
of the Whittle index strategy,
there is a line of work showing that the strategy is optimal
in a limited asymptotic sense \cite{weber1990index,gast2023exponential}
and in special cases \cite{mate2021risk}.
In \cite{akbarzadeh2022conditions}, the authors study general conditions
under which a problem is indexable.

\section{RMAB Definition and Notation}

The Restless Multi-Armed Bandit (RMAB) problem
is built on top of $n$ independent
Markov Decision Processes (MDPs).
Consider a particular MDP $i \in [n]$ with states
$\mathcal{S}_i$ and actions $\mathcal{A}_i$.
Let $\tau_i$ be the transition function that stochastically
maps each state-action pair to a state.
Let $r_i$  be a reward function that stochastically
maps each state-action pair to a real-valued reward.
Finally, let $c_i$ be a deterministic cost function that maps actions to non-negative real costs.
At time step $t$, the agent observes the state $s_i^{(t)} \in \mathcal{S}_i$, selects an action $a_i^{(t)} \in \mathcal{A}_i$, incurs a cost $c_i(a_i^{(t)})$, and receives a reward $r_i(s_i^{(t)}, a_i^{(t)})$.
The MDP then transitions to a new state $s_i^{(t+1)}$ according to the transition function $\tau_i(s_i^{(t)}, a_i^{(t)})$.
Let $\mathcal{S}^n = \mathcal{S}_1 \times \ldots \times \mathcal{S}_n$ be the combined state space and $\mathcal{A}^n = \mathcal{A}_1 \times \ldots \times \mathcal{A}_n$be the combined action space.
We use $\tau$, $r$, and $c$ to denote the transition, reward, and cost functions on the combined state and action spaces.

The problem can be formulated for a general set
of actions but, in order to define the Whittle index, we assume each action space $\mathcal{A}_i$ is binary.
In this case, action $1$ corresponds to choosing MDP $i$ and action $0$ (also called the {\em passive} action) corresponds to not choosing MDP $i$.
The cost function is generally restricted so that $c(1) = 1$ and $c(0)=0$.
However, we consider a more general setting where $c(1)$ could be any non-negative number.
There are several ways to define the objective function of the RMAB problem.
In this work, we consider the discounted expected reward with discount factor $\beta \in (0,1)$, over an infinite horizon.

The RMAB maximization problem, generalized to non-negative costs, is as follows:

\begin{problem}[Exact Maximization]
Consider a budget $C$.
The optimal solution to the maximization problem is a policy
$\pi^+ \colon \mathcal{S}^n \rightarrow \mathcal{A}^n$
with maximum expected discounted reward
\begin{align}\label{eq:max_objective}
    \E_{\{\mathbf{a}^{(t)}, \mathbf{s}^{(t)}\}_{t=1}^\infty \sim \pi, \tau}
    \left[ \sum_{t=1}^\infty \beta^{t-1} \sum_{i=1}^n
    r_i(s_i^{(t)}, a_i^{(t)}) \right]
\end{align}
subject to the constraint that
    $\sum_{i=1}^n c_i(a_i^{(t)}) \leq C$
for all $t$.
\end{problem}

Since evaluating the optimal strategy for the maximization problem
is PSPACE-hard \cite{papadimitriou1994complexity},
the classical approach is to make a series of relaxations
to get a decoupled problem for each MDP.
When we consider a single MDP $i$, we typically drop the subscript for notational brevity.
The decoupled problem considers each MDP in isolation and assigns a value to each action.
Let
\begin{align*}
    V_{\max}(s,a,\lambda) &= \E
    \left[ r(s, a) - \lambda c(a) \right] \\
    &+ \E_{s' \sim \tau(s,a)}
    \left[\beta \max_{a'} V_{\max}(s', a', \lambda) \right].
\end{align*}

The Whittle index gives a way to compare the value of taking the active action and the passive action for each MDP in each state.
(To define the Whittle index, we need a technical condition called indexability, the details of which appear in the technical appendix due to space constraints.)

\begin{definition}[Maximization Whittle Index \cite{whittle1988restless}]
The Maximization Whittle Index for an MDP in state $s$
is the smallest value for which it is optimal to take
the passive action.
Formally, the Whittle Index is given by
\begin{align}
    \lambda^+ = \inf \{\lambda: V_{\max}(s,0,\lambda) > V_{\max}(s,1,\lambda)\}.
\end{align}
\end{definition}

Because a larger value indicates the active action is more valuable than the passive action, the Whittle index suggests a natural measure that we can use to compare the value of taking an action in different MDPs.

\begin{algorithm}[tb]
    \caption{Greedy Maximization Heuristic (\textsc{GreedyMax})}
    \label{alg:max_heuristic}
    \textbf{Input}: A state $\mathbf{s}$, a reward function $r$, a cost function $c$, budget $C$ \\
    \textbf{Parameter}: Discount factor $\beta$ \\
    \textbf{Output}: Selected actions $\mathbf{a}$
    \begin{algorithmic}[1] 
        \STATE $\mathbf{a} \gets 0$ \COMMENT{Initialize selected actions}
        \WHILE{an unselected MDP fits budget}
            \STATE $i^* \gets \arg \max_{i \in [n]: a_i = 0} \{\lambda_i^+ : c_i(1) + c(\mathbf{a}) \leq C \}$
            \STATE $a_{i^*} \gets 1$ \COMMENT{Select MDP $i^*$}
        \ENDWHILE
        \STATE \textbf{return}  $\mathbf{a}$
    \end{algorithmic}
\end{algorithm}

In the classical RMAB maximization problem, the constraint is to choose a fixed number $m$ of actions in every time step to be active.  
A standard heuristic is to choose the $m$ bandits with the highest Whittle indices.
A simple generalization of this heuristic to arbitrary costs is given in Algorithm~\ref{alg:max_heuristic}.  
In each step, it chooses the MDP with highest Whittle index, from all MDPs that fit the remaining budget. 

\section{The Minimization Problem}

We introduce the RMAB minimization problem for settings where a fixed amount of reward must be met.

\begin{problem}[Exact Minimization]
Consider a reward threshold $R$.
The solution to the minimization problem is a policy
$\pi^- \colon \mathcal{S}^n \rightarrow \mathcal{A}^n$
with minimum expected discounted cost
\begin{align}\label{eq:rmab_min_objective}
    \E_{\{\mathbf{a}^{(t)}, \mathbf{s}^{(t)}\}_{t=1}^\infty \sim \pi, \tau} \left[
    \sum_{t=1}^\infty \beta^{t-1} \sum_{i=1}^n
    c_i(a_i^{(t)})
    \right]
\end{align}
subject to the constraint that
    $\sum_{i=1}^n r_i(s_i^{(t)}, a_i^{(t)}) \geq R $
for all $t$.
\end{problem}

It is easy to satisfy the budget constraint of the maximization problem by simply limiting the number of actions selected.
In contrast, it is difficult to satisfy the reward constraint of the minimization problem because the rewards from actions are stochastic.
As a result, the challenge of the minimization problem stems
from \textit{both} minimizing the objective \textit{and} satisfying the constraint.
We therefore consider a bi-criteria approximation problem.

\begin{problem}[Approximate Minimization]
Consider an approximation factor $\alpha \geq 1$ 
and a success probability $\rho \in (0,1]$.
A policy $\pi$ is an $(\alpha, \rho)$-approximation
to the minimization problem if its expected
discounted cost (Equation \ref{eq:rmab_min_objective})
is within a factor of $\alpha$ of the optimal policy $\pi^-$
and
\begin{align}
    \Pr \left(\sum_{i=1}^n r_i(s_i, a_i^{(t)}) \geq R \right)
    \geq \rho
\end{align}
for all $t$.
\end{problem}

Note that the approximate minimization problem is allowed
to violate the reward constraint with probability $1-\rho$
but only with respect to the randomness of the reward function.

Our main theoretical result is that even the approximate minimization problem is computationally hard.

\begin{restatable}{theorem}{pspace}
\label{thm:pspace}
Fix $\alpha \geq 1$ and $\rho > 0$.
Finding an $(\alpha,\rho)$-approximate strategy
for the minimization problem is PSPACE-hard even when costs are binary.
\end{restatable}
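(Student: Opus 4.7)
The plan is to give a gap-producing reduction from the decision version of the max-RMAB problem, which is PSPACE-hard even with binary costs by Papadimitriou and Tsitsiklis. Given a max-RMAB instance $I_{\max}$, I would construct in polynomial time a min-RMAB instance $I_{\min}$ with the following sharp separation: if $I_{\max}$ admits a feasible policy meeting its discounted reward threshold (the YES case), then $I_{\min}$ admits a policy of discounted cost \emph{exactly} $0$ meeting the per-step reward threshold with probability $1$; if $I_{\max}$ has no such policy (the NO case), then every policy of $I_{\min}$ meeting the per-step reward threshold with probability at least $\rho$ incurs discounted cost at least $1$.

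The gap between $0$ and $1$ is the heart of the argument. Any $(\alpha,\rho)$-approximation returns a policy of cost at most $\alpha \cdot 0 = 0$ on YES instances and at least $1$ on NO instances, and so it decides the PSPACE-hard problem in polynomial time. Because the multiplicative slack $\alpha$ is absorbed by having YES-cost equal to zero, the same reduction handles every $\alpha \geq 1$ simultaneously; the role of $\rho$ is confined to the construction of $I_{\min}$.

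For the construction, I would copy each arm of $I_{\max}$ into $I_{\min}$ as a \emph{free} arm of binary cost $0$ on both actions, preserving its original transitions and rewards. I would add \emph{paid} arms of binary cost $1$ when active that can directly produce $I_{\min}$'s per-step reward threshold at unit cost per step, providing a fallback with strictly positive cost. The core of the construction is a family of \emph{witness} arms whose states jointly track compliance with $I_{\max}$'s per-step budget and the accumulation of $I_{\max}$'s reward toward its threshold $R$; the witnesses emit per-step reward equal to $I_{\min}$'s threshold precisely when the policy on the free arms is $I_{\max}$-feasible, and zero otherwise. Consequently, a cost-$0$ feasible policy exists in $I_{\min}$ if and only if $I_{\max}$ is feasible.

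The main obstacle is reconciling $I_{\max}$'s globally discounted reward threshold with $I_{\min}$'s locally per-step reward threshold. My plan is a geometric stopping-time construction: the witness arm probabilistically terminates at each step with rate tied to $\beta$, so the expected reward accumulated before termination plays the role of $I_{\max}$'s discounted sum. Parallel repetition of the witness structure over polynomially many independent copies then boosts the YES-case per-step success probability above any fixed $\rho \in (0,1]$ by concentration, while the NO-case success probability stays at $0$ because no $I_{\max}$-feasible trajectory exists in the first place. The remaining checks---polynomial size, binary costs throughout, and correctness of the per-step reward evaluation---follow standard PT94-style encodings.
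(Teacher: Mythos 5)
Your proposal has a genuine gap at its core: the independence of the arms. In an RMAB, each MDP's transition function $\tau_i$ depends only on that MDP's own state and its own action; no arm can observe the states, actions, or realized rewards of the other arms. Your \emph{witness} arms are supposed to ``jointly track compliance with $I_{\max}$'s per-step budget and the accumulation of $I_{\max}$'s reward toward its threshold,'' but both quantities are functions of the joint trajectory of all the free arms, which a witness arm simply cannot condition on. The only cross-arm coupling the model provides is the global per-step constraint $\sum_i r_i(s_i, a_i) \geq R$, and exploiting that constraint to force coordinated behavior among independent MDPs is precisely the hard part of any such reduction. The paper's proof (following Papadimitriou--Tsitsiklis) spends essentially all of its effort here: it builds a synchronized clock $(j,k)$ into every cell MDP and chooses rewards $-R$, $R$, and $2R$ in the copy and validation phases so that the \emph{only} way to satisfy the per-step reward constraint is to propagate the simulated Turing machine's head and state correctly. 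Without an analogous mechanism, your claim that a cost-$0$ feasible policy exists in $I_{\min}$ if and only if $I_{\max}$ is feasible has no support.

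A second gap is the translation of $I_{\max}$'s objective---a discounted sum over an infinite horizon---into the per-step reward constraint of the minimization problem. A geometric stopping time can make the \emph{expected} reward before termination match the discounted sum, but the minimization constraint must hold at every time step $t$, and the success probability $\rho$ is taken only over the randomness of the reward function, not over the transitions; so parallel repetition over independent copies of the transition randomness cannot boost the per-step success probability in the sense the problem requires. On the positive side, your idea of a cost gap of $0$ versus at least $1$ to absorb every $\alpha \geq 1$ simultaneously is sound in principle---the paper uses the same trick on one branch (cost $0$ if the machine never halts) and a multiplicative gap of roughly $2\alpha$ versus $2\alpha^2$ on the other---but the reduction as proposed never reaches the point where that gap can be instantiated.
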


\begin{proof}[Proof Outline]
Our reduction is from the generic problem of determining whether a polynomial-space Turing machine halts.
The reduction constructs an  RMAB minimization instance which has an MDP  corresponding to each cell of the Turing machine tape.  
A policy for the RMAB instance can either simulate the Turing machine, or not.
If it simulates the Turing machine, it incurs cost at least $2\alpha^2$ if the Turing machine halts and no cost otherwise.
If it does not simulate the Turing machine, it always incurs cost $\alpha$.
Therefore making the optimal choice requires determining whether the Turing machine halts and making the wrong choice gives a worse than $\alpha$ approximation (to satisfy the reward constraint with any non-zero probability since the reward function is deterministic).
\end{proof}

Because of its length, we delay the full proof of the theorem to the technical appendix.
In the proof, we show that an algorithm which (even approximately) solves the minimization problem can be used to determine whether a polynomial-space Turing machine halts.
Because determining whether a polynomial-space Turing machine halts is PSPACE-complete by definition, it immediately follows that the approximate minimization problem is PSPACE-hard.

\section{Decoupling the Minimization Problem}

Given the hardness of the approximate minimization problem, there are no efficient algorithms with provable guarantees unless P $=$ PSPACE.
Instead, we turn to heuristics.
Following the approach of \cite{whittle1988restless}, we decouple the minimization problem to consider each MDP in isolation.
In particular, we relax the constraint in the minimization problem and then apply Lagrange multipliers.
The resulting decoupled problem for a particular MDP in the minimization problem appears below.
For notational brevity, we drop the subscript $i$.

\begin{problem}[Decoupled Minimization]
Consider a particular MDP in the RMAB instance.
Fix $\lambda \geq 0$.
The decoupled minimization problem for that MDP is to find the policy
\begin{align}
    \arg \max_{\pi \colon \mathcal{S} \rightarrow \mathcal{A}}
    \E \left[ \sum_{t=1}^\infty \beta^{t-1}
    (\lambda \cdot r(s^{(t)}, a^{(t)}) - c(a^{(t)})) \right].
\end{align}
\end{problem}

\begin{proof}[Proof of Decoupling]
We now show how to convert the exact minimization problem
into the decoupled minimization problem.
The process is analogous to the maximization case in the literature \cite{whittle1988restless,wang2023optimistic}.
The idea is to turn the constrained optimization problem
into an unconstrained optimization problem.
We will accomplish this by applying Lagrange multipliers.
However, the first step is to relax the constraint so that
the objective and constraint are similar.
In particular, we will relax the constraint 
in the exact minimization problem to hold on average:
\begin{align}
    (1-\beta) \E \left[
    \sum_{t=1}^\infty \beta^{t-1} \sum_{i=1}^n
    r_i(s_i^{(t)}, a_i^{(t)}) \right]  \geq R.
\end{align}
The multiplicative normalization factor $1-\beta$
is chosen so that if the strict constraint is satisfied 
for all $t$, 
then the relaxed constraint is also satisfied.


We apply Lagrange multipliers to the constrained problem under the relaxed constraint and reach the Lagrangian function given by:
\begin{align}\label{eq:after_lagrange}
    \E \left[
    \sum_{t=1}^\infty \beta^{t-1} \sum_{i=1}^n
    c_i(a_i^{(t)}) - \lambda \cdot r_i(s_i^{(t)}, a_i^{(t)}) \right]
    + \frac{\lambda R}{1-\beta}
\end{align}
where $\lambda \geq 0$ is a Lagrange multiplier.


The next step is to decouple the unconstrained problem.
We can interchange the summations since $n$ is finite.
Then we consider the problem for a fixed $\lambda \geq 0$.
The resulting problem is
\begin{align}
    \min_{\pi} \E \left[ 
    \sum_{i=1}^n \sum_{t=1}^\infty \beta^{t-1}
    (c_i(a_i^{(t)}) - \lambda \cdot r_i(s_i^{(t)}, a_i^{(t)}))
    \right]
    + \frac{\lambda R}{1-\beta}.
\end{align}
The problem is now decoupled since the policy for each
MDP is optimized in isolation.
With the observations that the final term is constant
for fixed $\lambda$ and that minimization is equivalent
to maximization after a sign flip, 
the decoupled minimization problem follows.
\end{proof}

The difficulty of the RMAB problem lies in the complicated interactions between MDPs.
By considering each MDP separately, the decoupled problem lets us characterize the `value' of selecting a particular MDP.
Then the RMAB problem can be solved by a heuristic for choosing MDPs that relies on their value.

Analogous to the maximization problem, we introduce the Whittle index for the minimization problem to characterize the value of choosing each MDP.
We will first define the value function for the minimization problem.
\begin{align*}
    V_{\min}(s,a,\lambda) &= \E
    \left[ \lambda \cdot r(s, a) - c(a) \right] \\
    &+ \E_{s' \sim \tau(s,a)}
    \left[\beta \max_{a'} V_{\min}(s', a', \lambda) \right].
\end{align*}

For the Whittle index to be defined, the MDP needs to meet a technical condition known as indexability.

\begin{definition}[Indexability for Minimization]
An MDP is indexable if for all states $s \in \mathcal{S}$
and real numbers $\lambda' \leq \lambda$,
\begin{align}
    V_{\min}(s,0,\lambda) &> V_{\min}(s,1,\lambda) \nonumber \\
    \implies
    V_{\min}(s,0,\lambda') &> V_{\min}(s,1,\lambda').
\end{align}
\end{definition}

In words, indexibility is the following property: If it is optimal to take the passive action under
a subsidy $\lambda$ in state $s$, 
then it must also be optimal to take the passive 
action under a smaller subsidy $\lambda'$ in state $s$.

Then the Whittle index for the minimization problem is naturally the following:

\begin{definition}[Whittle Index for Minimization]
The minimization Whittle Index for an MDP in state $s$
is the largest value for which it is optimal to take
the passive action.
Formally, the index is given by
\begin{align}
    \lambda^- = \sup \{\lambda: V_{\min}(s,0,\lambda) > V_{\min}(s,1,\lambda)\}.
\end{align}
\end{definition}

\begin{algorithm}[tb]
    \caption{{Greedy Minimization Heuristic}}
    \label{alg:min_heuristic}
    \textbf{Input}: A state $\mathbf{s}$, a reward function $r$, and a cost function $c$ \\
    \textbf{Parameters}: Discount factor $\beta$, reward threshold $R$, success probability $\rho$ \\
    \textbf{Output}: Selected actions $\mathbf{a}$
    \begin{algorithmic}[1]
    \STATE $\mathbf{a} \gets 0$ \COMMENT{Initialize selected actions}
    \WHILE{$\Pr\left(\sum_{i:a_i=1} r_i(s_i, 1) \geq R\right) < \rho \land \| \mathbf{a} \|_1 < n$}
        \STATE $i^* \gets \arg \min_{i: a_i = 0} \lambda_i^- $
        \STATE $a_{i^*} \gets 1$ \COMMENT{Select MDP $i^*$}
    \ENDWHILE
    \end{algorithmic}
\end{algorithm}
With the Whittle index in hand, we can develop a heuristic for the minimization problem.
The heuristic is analogous to the greedy maximization heuristic except that the stopping condition is different.
Instead of stopping when the budget is exceeded, the minimization heuristic stops when the reward constraint is probabilistically satisfied.
Algorithm~\ref{alg:min_heuristic} presents this strategy in pseudocode.

\paragraph{Remark:} It is not obvious how to determine the probability of satisfying the constraint.
If the reward function is known in closed form,
then the probability can be computed exactly.
However, in many cases, the reward function is not known in closed form or computing the exact probability is computationally intensive (i.e., because there are many possibilities on the combined outcome space of the actions).
Another option is to use a concentration inequality
specialized to sums of random variables such as
Bernstein's or Hoeffding's inequalities \cite{bernstein1924modification,hoeffding1963probability}.
However, the concentration inequality may be quite loose depending on the RMAB reward function and so the heuristic could be overly conservative.
The option we recommend is to simulate a small number of realizations of the reward function.
This approach is computationally efficient and can be made arbitrarily accurate by increasing the number of simulations.
However, simulations may not be possible in all settings of the problem.

%
%

\section{Maximization vs Minimization Problems}

In this section, we show the close connection between the Whittle indices for the maximization and minimization problems.
However, despite this connection, we also show that heuristics analogous to those used for the maximization problem can perform arbitrarily poorly for the minimization problem.

In general, indexability does not hold for all
RMAB instances \cite{weber1990index}.
So the first step of using any Whittle index-based heuristic is establishing that indexability holds.
Unfortunately, it can be quite difficult for a particular RMAB instance.
Fortunately, we show that if indexability
holds for the maximization problem
then it also holds for the minimization problem.
Due to space constraints, the proofs of Corollaries \ref{coro:indexability} and \ref{coro:index} appear in the technical appendix.

\begin{restatable}{corollary}{indexability}
\label{coro:indexability}
The decoupled maximization problem
is indexable if and only if the 
decoupled minimization problem is indexable.
\end{restatable}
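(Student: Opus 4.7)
The plan is to exploit the tight algebraic correspondence between the Bellman equations for $V_{\max}$ and $V_{\min}$. A direct inspection of the two definitions suggests the identity
\begin{equation*}
V_{\min}(s,a,\lambda) \;=\; \lambda \cdot V_{\max}\!\left(s,a,\tfrac{1}{\lambda}\right) \qquad \text{for every } \lambda > 0.
\end{equation*}
So the first step is to verify this identity. I would do so by substituting the right-hand side into the Bellman equation for $V_{\min}$: the immediate-reward term becomes $\lambda\, \E[r(s,a)] - c(a)$ (matching the definition of $V_{\min}$), the continuation term is $\lambda \cdot \E_{s'}[\beta \max_{a'} V_{\max}(s',a',1/\lambda)] = \E_{s'}[\beta \max_{a'} \lambda V_{\max}(s',a',1/\lambda)]$ because $\lambda > 0$ does not flip the $\max$, and then uniqueness of the fixed point of the discounted Bellman operator (a contraction on bounded functions) closes the argument.

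Once this identity is in hand, the equivalence of the two inequalities follows by dividing both sides by $\lambda > 0$:
\begin{equation*}
V_{\min}(s,0,\lambda) > V_{\min}(s,1,\lambda) \iff V_{\max}\!\left(s,0,\tfrac{1}{\lambda}\right) > V_{\max}\!\left(s,1,\tfrac{1}{\lambda}\right).
\end{equation*}
Now introduce the substitution $\mu = 1/\lambda$ and $\mu' = 1/\lambda'$. Since $\lambda \mapsto 1/\lambda$ is order-reversing on $(0,\infty)$, the condition $\lambda' \leq \lambda$ from the minimization-indexability definition translates precisely into $\mu' \geq \mu$, which is the condition appearing in the maximization-indexability definition. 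Hence the two indexability quantifications are the same statement written in different variables, giving the biconditional.

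The last step is to dispose of the boundary case $\lambda = 0$ (which corresponds to $\mu = \infty$). At $\lambda = 0$ the decoupled minimization problem is the trivial cost-minimization MDP, and because $c(0)=0$ and $c(1) \geq 0$, the passive action is always weakly preferred, so the implication holds vacuously or trivially on the boundary; symmetrically, for sufficiently large $\mu$ the passive action dominates in the maximization problem, which does not disturb the implication required by max-indexability. The main obstacle I expect is bookkeeping around this boundary: making sure that strict versus weak inequalities are preserved when we invoke the identity on the open half-line and then take limits, and confirming that the indexability condition in each problem is vacuous precisely in the region where the change of variables breaks down. Beyond that, the argument is essentially a clean algebraic substitution.
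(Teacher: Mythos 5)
Your proposal is correct and follows essentially the same route as the paper: both rest on the scaling identity $V_{\max}(s,a,\lambda) = \lambda\, V_{\min}(s,a,1/\lambda)$ (equivalently, your $V_{\min}(s,a,\lambda) = \lambda\, V_{\max}(s,a,1/\lambda)$) together with the observation that $\lambda \mapsto 1/\lambda$ reverses order on $(0,\infty)$, so the two indexability quantifications coincide under the change of variables. Your extra care in justifying the identity via uniqueness of the Bellman fixed point and in handling the $\lambda=0$ boundary is a welcome tightening of details the paper leaves implicit, but it does not change the argument.
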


A similar result holds for the Whittle index.
If the Whittle index is known for either the
maximization or minimization problem then it
is also known for the other problem.

\begin{restatable}{corollary}{index}
\label{coro:index}
Suppose the decoupled maximization
and minimization problems are indexable.
Let $\lambda^+$ be the Whittle index for
the decoupled maximization problem
and $\lambda^-$ be the index for the decoupled
minimization problem.
If $\lambda^+, \lambda^- > 0$, then $\lambda^+ = 1/\lambda^-.$
\end{restatable}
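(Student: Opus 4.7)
The plan is to establish a pointwise identity between the two value functions and then translate it into a statement about the supremum and infimum that define the indices. Specifically, I would first prove that for every $\lambda > 0$, every state $s$, and every action $a$,
\begin{align*}
V_{\min}(s,a,\lambda) = \lambda \cdot V_{\max}(s,a,1/\lambda).
\end{align*}
The natural way to do this is to plug $\lambda V_{\max}(\cdot,\cdot,1/\lambda)$ into the Bellman equation that defines $V_{\min}$ and check that it is a fixed point: scaling the recursion for $V_{\max}(s,a,1/\lambda)$ by $\lambda$ turns the instantaneous term $r(s,a) - (1/\lambda) c(a)$ into $\lambda r(s,a) - c(a)$, and commutes with the $\max$ in the continuation term because $\lambda > 0$. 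Since the Bellman operator is a $\beta$-contraction on bounded functions, its fixed point is unique, so the identity follows.

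Next I would use this identity to translate the definition of $\lambda^-$ into a statement about $V_{\max}$. For any $\lambda > 0$, dividing by the positive factor $\lambda$ gives
\begin{align*}
V_{\min}(s,0,\lambda) > V_{\min}(s,1,\lambda) \iff V_{\max}(s,0,1/\lambda) > V_{\max}(s,1,1/\lambda).
\end{align*}
Hence
\begin{align*}
\lambda^- = \sup \{\lambda > 0 : V_{\max}(s,0,1/\lambda) > V_{\max}(s,1,1/\lambda)\}.
\end{align*}
Substituting $\mu = 1/\lambda$ and using that $\lambda \mapsto 1/\lambda$ is an order-reversing bijection of $(0,\infty)$ onto itself converts this supremum into the reciprocal of the corresponding infimum, namely
\begin{align*}
\lambda^- = \frac{1}{\inf\{\mu > 0 : V_{\max}(s,0,\mu) > V_{\max}(s,1,\mu)\}} = \frac{1}{\lambda^+},
\end{align*}
where the last equality uses the hypothesis $\lambda^+ > 0$ so that the infimum over $\mu > 0$ agrees with the infimum in the definition of $\lambda^+$ (nothing is lost by restricting $\mu$ to be strictly positive). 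This gives the claimed relation $\lambda^+ = 1/\lambda^-$.

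The only delicate step is the pointwise identity; once it is in hand, the rest is just a change of variables. The main thing to be careful about is the sign/scaling argument: the equivalence of the strict inequalities on $V_{\min}$ and $V_{\max}$ relies on $\lambda > 0$, which is why the statement requires $\lambda^+, \lambda^- > 0$. I would also briefly remark that indexability (which is assumed) guarantees that the defining sets on both sides are of the expected form (downward closed in $\lambda$ for $\lambda^+$ and upward closed for $\lambda^-$), so that the sup/inf correspondence under the order-reversing map $\mu = 1/\lambda$ behaves as expected and no spurious values are introduced.
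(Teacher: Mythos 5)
Your proposal is correct and follows essentially the same route as the paper: both rest on the scaling identity $V_{\min}(s,a,\lambda) = \lambda\, V_{\max}(s,a,1/\lambda)$ and the order-reversing change of variables $\mu = 1/\lambda$, which turns the supremum defining $\lambda^-$ into the reciprocal of the infimum defining $\lambda^+$. Your version is somewhat more careful than the paper's (you justify the identity via uniqueness of the Bellman fixed point and explicitly handle the restriction to positive $\lambda$), but the underlying argument is the same.
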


Instead of deriving the Whittle index for the minimization problem \textit{and} the maximization problem, Corollary \ref{coro:index} tells us how to find the Whittle index for the minimization problem if the Whittle index for the maximization problem is already known.
The following are a selection of RMABS where the Whittle index is known in closed form for the maximization problem: age of information* \cite{tripathi2019whittle}, partially hidden two-state \cite{liu2010indexability}, completely hidden two-state \cite{meshram2018whittle}, collapsing bandits \cite{mate2020collapsing}, crawling content* \cite{avrachenkov2016whittle}, and controlled resets \cite{akbarzadeh2019restless}.
An asterisk indicates the problem is formulated to maximize the expected (rather than discounted) reward.
By Corollaries \ref{coro:indexability} and \ref{coro:index}, the minimization problem is indexable for each of these instances and the Whittle index for the minimization problem can be easily computed.


So far, it seems that the maximization
and minimization problems are morally the same.
However, we will show that algorithms adapted from the maximization problem can perform arbitrarily poorly for the minimization problem.

\begin{claim}\label{claim:min_vs_max}
Let $n$ be the number of MDPs and
$\rho > 0$ be a success probability.
There is a simple RMAB instance with unit costs where
Algorithm~\ref{alg:max_heuristic} is optimal but Algorithm~\ref{alg:min_heuristic} has expected cost $n$ times the optimal strategy in order to satisfy the constraint with probability $\rho$.
\end{claim}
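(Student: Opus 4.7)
The plan is to exhibit a single explicit RMAB instance on which greedy max matches the optimum while greedy min overshoots by a factor of $n$. The instance will consist of $n$ single-state MDPs with unit costs, so indexability for both problems is automatic and the Bellman recursions collapse to the one-shot identity $\lambda^+ = \E[r(s,1)]$.

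Concretely, I will take MDP~$1$ to return reward $R$ with probability $\rho$ (and $0$ otherwise) when activated, and each MDP $i \in \{2,\ldots,n\}$ to return reward $K$ with probability $\epsilon$ (and $0$ otherwise), with $\epsilon = \rho/(2n)$ and $K = R/\epsilon$. Then $\lambda_1^+ = R\rho$ and $\lambda_i^+ = K\epsilon = R$ for $i \geq 2$, so by Corollary~\ref{coro:index} we have $\lambda_i^- = 1/R < 1/(R\rho) = \lambda_1^-$, and greedy min therefore strictly prefers the MDPs $2,\ldots,n$ to MDP~$1$. With budget $C = 1$, Algorithm~\ref{alg:max_heuristic} picks an MDP from $\{2,\ldots,n\}$ and obtains expected reward $R$ per round, the best achievable by any single-pull policy, so greedy max is optimal for the maximization problem on this instance.

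On the minimization side, the optimal strategy activates MDP~$1$ alone, costing $1$ per time step and satisfying $\Pr(r_1 \geq R) = \rho$; no cheaper strategy can meet the constraint at all, since every other single MDP gives success probability $\epsilon < \rho$ and the empty selection fails with probability $1$. Algorithm~\ref{alg:min_heuristic}, in contrast, activates MDPs $2,\ldots,n$ in succession; after all $n-1$ have been chosen, a union bound gives $\Pr(\sum_{i \geq 2} r_i \geq R) \leq (n-1)\epsilon < \rho$, so the stopping condition still fails and MDP~$1$ is then activated, yielding total cost $n$ per step. Because every MDP is single-state, both strategies repeat the same choices forever, giving discounted costs $n/(1-\beta)$ and $1/(1-\beta)$, a ratio of $n$.

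The only delicate point is choosing the parameters so that $K\epsilon > R\rho$ (so the Whittle ordering genuinely misleads greedy min) while simultaneously $(n-1)\epsilon < \rho$ (so the union bound strictly fails); the choices above achieve both. I do not anticipate any further obstacle beyond this parameter bookkeeping.
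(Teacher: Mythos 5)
Your proposal is correct and follows essentially the same construction as the paper: single-state MDPs with unit costs, one reliable MDP that meets the threshold at cost $1$, and $n-1$ high-reward low-probability MDPs whose larger expected reward (hence smaller minimization Whittle index) misleads Algorithm~\ref{alg:min_heuristic} into selecting all of them before the reliable one. The only difference is bookkeeping — you bound the failure of the stopping condition by a union bound with $\epsilon = \rho/(2n)$, while the paper computes $(1-p)^{n-1} > 1-\rho$ exactly for a logarithmic choice of $p$ (and needs $n$ sufficiently large, which your bound avoids).
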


Notice that the performance of  Algorithm~\ref{alg:min_heuristic} is the worst possible for the unit cost case:
Always choosing \textit{every} MDP in each time step 
will trivially give an $n$-approximation.

\begin{proof}[Proof of Claim \ref{claim:min_vs_max}]
Consider the following RMAB instance.
For every MDP $i \in [n]$, there is a single state $s$.
If the active action is selected, then
with probability $0 < p_i < 1$, reward $r_i > 0$ is received. 
The cost of selecting the active action from $s$ is $1$.
If the passive action is selected,
then no reward is received and no cost is incurred.
Observe that for MDP $i$, we have
\begin{align*}
    V_{\max}(s,1,\lambda) &= p_i r_i - \lambda
    + \max_{a} V_{\max}(s,a,\lambda) \\
    V_{\max}(s,0,\lambda) &= 0
     + \max_{a} V_{\max}(s,a,\lambda).
\end{align*}
The decoupled maximization problem is clearly indexable
and the Whittle index is $\lambda_i^+ = p_i r_i$.
Algorithm~\ref{alg:max_heuristic} selects MDPs
with the largest Whittle indices first.
For the maximization problem, this strategy is optimal
(a simple interchange argument shows why).

By Corollary \ref{coro:indexability} 
and Corollary \ref{coro:index}, 
the decoupled minimization problem is also indexable
and the Whittle index for the minimization problem is
$\lambda_i^- = 1 / (p_i r_i)$.
Algorithm~\ref{alg:min_heuristic} selects MDPs
with the smallest minimization Whittle indices 
(i.e., largest maximization Whittle indices) first.

We now exhibit a choice of parameters where the strategy
fails for the minimization problem.
For $i < n$, let $p_i = \frac{\log_{1/(2e)}(1-\rho)}{n-1}$
and $r_i=10 R /p_i$ 
where $R$ is the reward threshold.
Then the Whittle index for the minimization problem is $\lambda_i^- = 1/(10R)$.
Let $p_{n} = 1$ and $r_{n} = R$.
Then the Whittle index for the minimization problem is $\lambda_{n}^- = 1/R$.
Algorithm~\ref{alg:min_heuristic} selects
the MDPs with $i<n$ first.
Even if the algorithm selects all MDPs with $i<n$,
the probability of \textit{not} satisfying the constraint is
\begin{align}
    \left(
    1 - \frac{\log_{1/(2e)}(1-\rho)}{n-1}
    \right)^{n-1}
    &> \left(\frac{1}{2e}\right)^{\log_{1/(2e)}(1-\rho)}
    = 1-\rho
\end{align}
where the inequality holds for sufficiently large $n$.
Therefore, the probability of satisfying the constraint
is strictly less than $\rho$.
In contrast, choosing the $n$th MDP deterministically
achieves reward $R$ with cost $1$.
Therefore, Algorithm~\ref{alg:min_heuristic}
has expected cost $n$ 
times the optimal strategy in order to satisfy the constraint
with probability $\rho$.
\end{proof}

\section{Heuristics for the Minimization Problem}

The example in Claim \ref{claim:min_vs_max} shows that the greedy minimization heuristic described in Algorithm~\ref{alg:min_heuristic} can fail badly.
As a result, we need alternative algorithms for the minimization problem.
Because the approximate minimization problem is PSPACE-hard, there is no polynomial-time algorithm that can provably approximate the minimization problem unless P $=$ PSPACE.
Instead, we can at most hope for \textit{heuristic} algorithms without provable guarantees that perform well in practice.
In this section, we present and discuss two such algorithms we generalize from prior work: a standard increasing budget heuristic described in Algorithm~\ref{alg:budget_min_heuristic} and a more specialized truncated reward heuristic described in Algorithm~\ref{alg:truncated_min_heuristic}.
A slightly simpler version of Algorithm~\ref{alg:truncated_min_heuristic} has been theoretically analyzed in special cases of our problem in prior work \cite{ene2018approximation,jiang2020algorithms,ghuge2022non}.

The first algorithm we consider is an increasing budget heuristic.
At each phase, the heuristic greedily selects MDPs until it exhausts the current budget.
The budget grows exponentially with a multiplicative factor $m$.
In this way, the heuristic takes low cost actions first.
If the low cost actions satisfy the reward constraint, then we've satisfied the constraint at minimum cost.
If the low cost actions do not satisfy the constraint and we need to keep going, then we haven't paid too much more than the optimal strategy because the actions are low cost and the budget grows exponentially.

\begin{algorithm}[h!]
    \caption{Increasing Budget Minimization Heuristic}
    \label{alg:budget_min_heuristic}
    \textbf{Input}: A state $\mathbf{s}$, a reward function $r$, and a cost function $c$ \\
    \textbf{Parameters}: Discount factor $\beta$, reward threshold $R$, success probability $\rho$, budget multiplier $m$ \\
    \textbf{Output}: Selected actions $\mathbf{a}$
    \begin{algorithmic}[1]
    \STATE $\mathbf{a} \gets \mathbf{0}$ \COMMENT{Initialize selected actions}
    \STATE $b \gets \min_{i \in [n]} c_i(1)$ \COMMENT{Budget}
    \WHILE{$\Pr\left(\sum_{i:a_i=1} r_i(s_i, 1) \geq R\right) < \rho \land \| \mathbf{a} \|_1 < n $}
    \STATE $\triangleright$ Greedy {selection\footnotemark} within budget $b$
    \STATE $\mathbf{a'} \gets $ \textsc{GreedyMax}($\mathbf{s}, r, c, b)$
    \STATE $\mathbf{a} \gets \mathbf{a} + \mathbf{a'}$
    \STATE $b \gets mb $ \COMMENT{Update budget}
    \ENDWHILE 
    \end{algorithmic}
\end{algorithm}
\footnotetext{By Corollary \ref{coro:index}, the largest Whittle indices for the maximization problem are the smallest Whittle indices for the minimization problem. We use this observation to simplify the pseudocode by calling \textsc{GreedyMax}.}

\begin{algorithm}[h!]
    \caption{Truncated Reward Minimization Heuristic}
    \label{alg:truncated_min_heuristic}
    \begin{flushleft}
    \textbf{Input}: A state $\mathbf{s}$, a reward function $r$, a cost function $c$ \\
    \textbf{Parameters}: Discount factor $\beta$, reward threshold $R$, success probability $\rho$, budget multiplier $m$ \\
    \textbf{Output}: Selected actions $\mathbf{a}$
    \end{flushleft}
    \begin{algorithmic}[1]
    \STATE $R_{\max} \gets \max_{i} r_i(s_i, 1)$ \COMMENT{Maximum possible reward}   
    \STATE $\mathbf{a} \gets \mathbf{0}$
    \STATE $b \gets \min_{i \in [n]} c_i(1)$
    \WHILE{$\Pr\left(\sum_{i:a_i=1} r_i(s_i, 1) \geq R\right) < \rho \land \| \mathbf{a} \|_1 < n $}
        \FOR{$\tau = 0, \ldots, \lceil \log_2 R_{\max} \rceil$}
        \STATE $\triangleright$ Truncate reward function
        \STATE $r'(\cdot) \gets \min\{r(\cdot)/2^\tau, 1\}$
        \STATE Greedy selection with truncated reward 
        \STATE $\mathbf{a}_\tau \gets $ \textsc{GreedyMax}$(\mathbf{s}, r', c, b)$
        \ENDFOR
    \STATE $\tau^* \gets$ smallest $\tau$ such that $\mathbf{a}_\tau$ is poor\footnotemark
    \STATE $\mathbf{a} \gets \mathbf{a} + \mathbf{a}_{\tau^*}$
    \STATE $b \gets m b $ \COMMENT{Update budget}
    \ENDWHILE
    \end{algorithmic}
\end{algorithm}
\footnotetext{We say a scale is \textit{poor} if all the remaining MDPs that were not selected have a Whittle index of at most $1/b$.}

While it takes low cost actions first, Algorithm~\ref{alg:budget_min_heuristic} can choose actions which are desirable in expectation but only because their reward is large enough to balance out their small probability of having reward.
Notice these actions \textit{are} desirable for the maximization problem because the goal is to maximize \textit{expected} reward.
However, for the minimization problem, these actions are not desirable because they have a low probability of satisfying the reward threshold.

The second algorithm we consider addresses this problem by \textit{truncating} rewards at different levels.
Just as Algorithm~\ref{alg:budget_min_heuristic} initially only considers actions with low cost, Algorithm~\ref{alg:truncated_min_heuristic} initially only considers actions with low reward.
The advantage is that the actions selected have high probability of outputting reward which is helpful for solving the minimization problem.
The algorithm still performs well when high reward actions are better because the truncation factor exponentially increases.
Since we want to keep the increasing budget property, we repeat the truncation factor search for each size of the budget.

\section{Experiments}\label{sec:experiments}

We test our algorithms on real and synthetic data sets, with deterministic costs and stochastic rewards.
On each data set, we compare the discounted cost as the reward threshold, number of MDPS, and success probability vary.
We set the discount factor to $\beta=.9$ and run each simulation for 10 time steps, repeating 10 times.
We report the mean (lines) and standard deviations (shaded regions) in the plots.
The code is available in the supplementary material and will be accessible online after publication.

\begin{figure}[t]
    \centering
    \includegraphics[width=\linewidth]{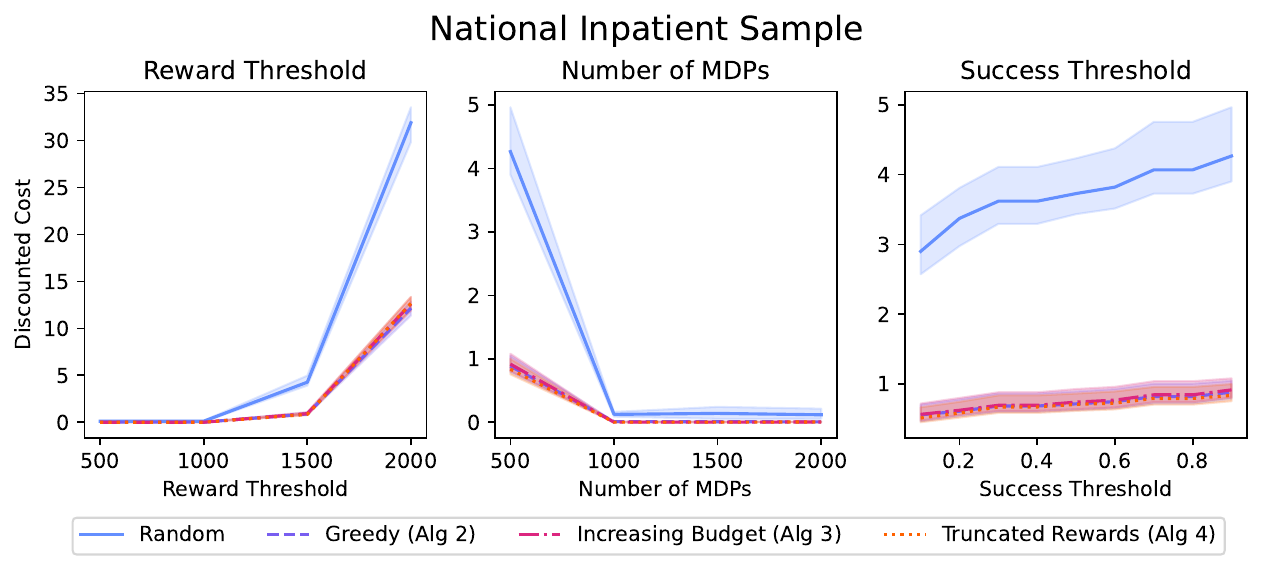}
    \caption{Because the costs and rewards are similar in the dataset, the exponentially increasing budget and the reward truncation have little effect.
    As a result, Algorithms~\ref{alg:min_heuristic}, \ref{alg:budget_min_heuristic}, and \ref{alg:truncated_min_heuristic} all give solutions with similar cost.} 
    \label{fig:real}
    \vspace{-1em}
\end{figure}

\subsection{National Inpatient Sample}

Figure \ref{fig:real} shows the performance of algorithms for selecting patient care in the National Inpatient Sample data set \cite{hcup2018national}.
The cost is the (normalized) dollar cost of treatment and the reward is the improvement in a patient's medical condition as measured by a four-level severity index \cite{averill2003all}.
Because the costs and rewards are well-concentrated in the data, the increasing budget and truncating reward techniques have little effect on the discounted cost.
The random baseline of uniformly selecting new actions is slow because it needs to select many more actions before (probabilistically) satisfying the reward constraint.
Since Whittle indices cannot be computed for the data set in closed form, we use a Q-learning approach to approximate the Whittle indices \cite{robledo2022qwi}.
Additional details are available in the supplementary material.



\subsection{Partially Observable MDPs}

We also test the algorithms on synthetic data sets where the Whittle indices can be computed in closed form.
Each MDP is in either a reward-producing state or a non-reward-producing state.
The MDP transitions between states at each time step and the current state can only be observed if the MDP is selected.
The goal is to select MDPs that are likely to give large reward.

Figure \ref{fig:synthetic_adversarial} shows the performance of algorithms on adversarial instances where half the MDPs reliably give a small reward and half the MDPs unreliably give a large reward.
Algorithms \ref{alg:min_heuristic} and \ref{alg:budget_min_heuristic} (the algorithms are actually the same because the costs are all 1) select the MDPs with unreliable reward first and perform poorly.
In contrast, the truncated reward heuristic quickly gives solutions with lower cost because it truncates the large rewards of the second group and selects the MDPs with reliable reward instead.

Figure~\ref{fig:synthetic_uniform} shows the performance of algorithms on uniform instances where the probabilities are selected randomly while the rewards are chosen so that all MDPs have roughly equal expected reward.
Since the probabilities and therefore rewards are similar, Algorithm~\ref{alg:min_heuristic} gives the best performance.

\begin{figure}[h!]
    \centering
    \includegraphics[width=\linewidth]{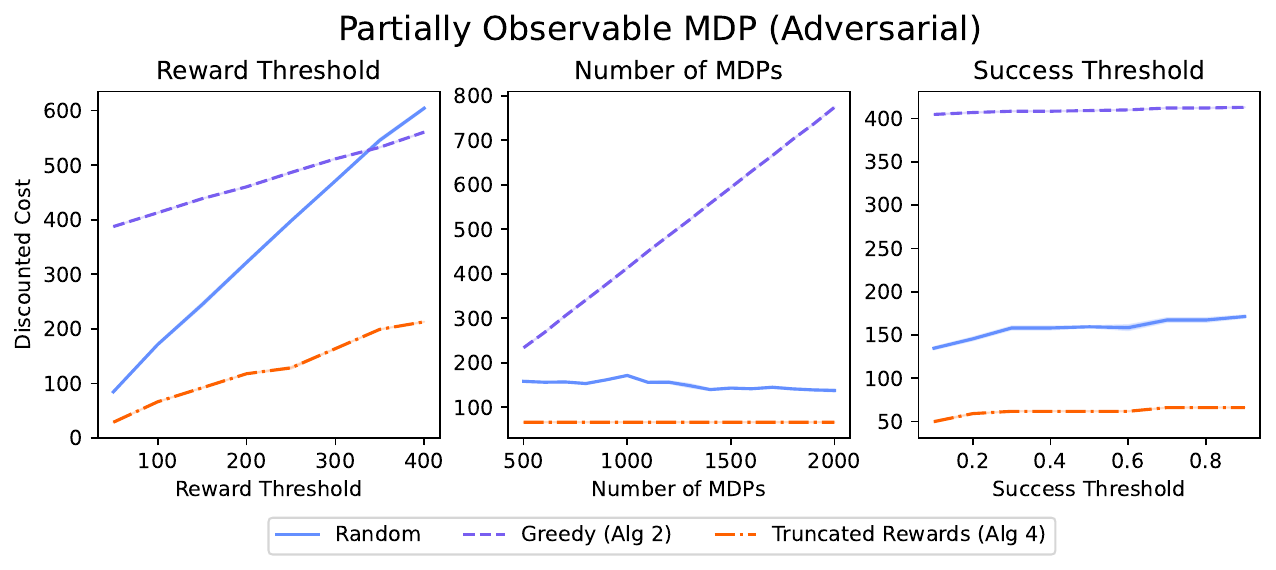}
    \caption{Because the \textit{expected} reward of unreliable MDPS is higher, Algorithms~\ref{alg:min_heuristic} and \ref{alg:budget_min_heuristic} select them first. However, Algorithm \ref{alg:truncated_min_heuristic} quickly gives solutions with lower cost because it truncates the large rewards of the second group and selects the MDPs with reliable reward instead.}
    \label{fig:synthetic_adversarial}
    \vspace{-1em}
\end{figure}

\begin{figure}[h!]
    \centering
    \includegraphics[width=\linewidth]{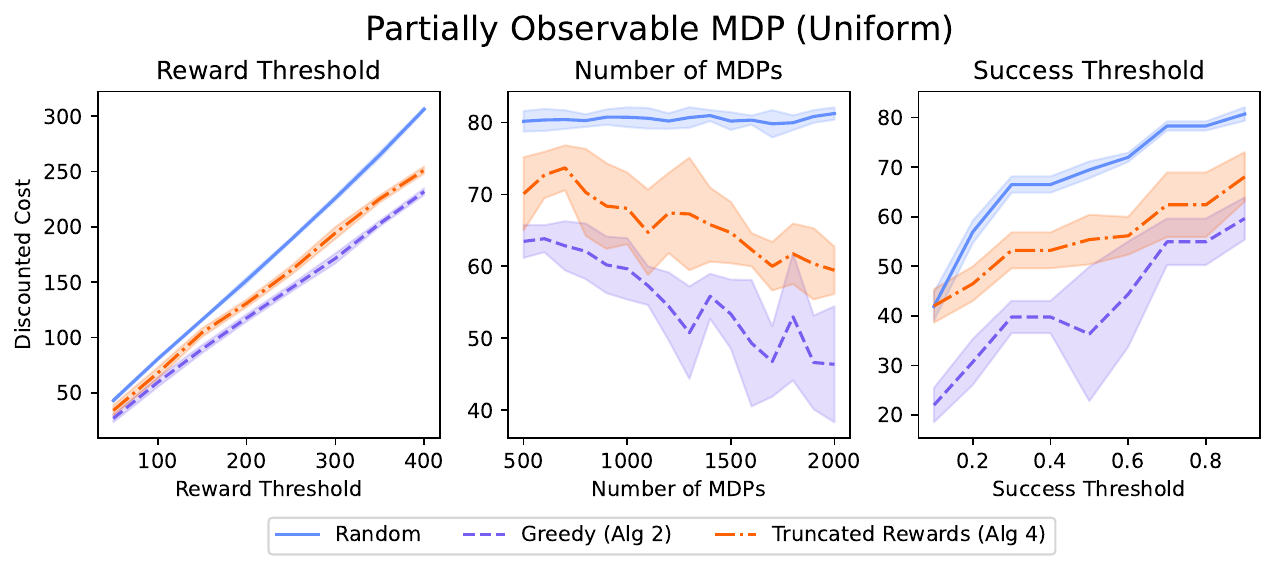}
    \caption{Since the probabilities and rewards are similar in the uniform parameter setting, expected reward is a good indication of quality and Algorithm~\ref{alg:min_heuristic} gives the best performance.}
    \label{fig:synthetic_uniform}
    \vspace{-1em}
\end{figure}

\section{Conclusion}

We introduce the minimization problem for RMABs, designed specifically for applications where a certain amount of reward must be achieved.
We show that even approximating the minimization problem is PSPACE-hard.
Without provably accurate algorithms, we turn to heuristics for solving the minimization problem, defining minimization Whittle indices and presenting two heuristics. 
While algorithms adapted from the maximization problem perform well when rewards are well-concentrated, we find that specialized algorithms are needed for more complex problems.
We believe our work suggests the importance of continued research into the minimization problem for RMABs. 

\section*{Acknowledgements}
R. Teal Witter was supported by the NSF Graduate Research Fellowship under Grant No. DGE-2234660.
Lisa Hellerstein was supported in part by NSF Award IIS-1909335.

\clearpage
\bibliographystyle{alpha}
\bibliography{references}

\newcommand{\etalchar}[1]{$^{#1}$}
\begin{thebibliography}{MMT{\etalchar{+}}22}

\bibitem[AB16]{avrachenkov2016whittle}
Konstantin~E Avrachenkov and Vivek~S Borkar.
\newblock Whittle index policy for crawling ephemeral content.
\newblock {\em IEEE Transactions on Control of Network Systems}, 5(1):446--455,
  2016.

\bibitem[AB22]{avrachenkov2022whittle}
Konstantin~E Avrachenkov and Vivek~S Borkar.
\newblock Whittle index based q-learning for restless bandits with average
  reward.
\newblock {\em Automatica}, 139:110186, 2022.

\bibitem[{Age}19]{hcup2018national}
{Agency for Healthcare Research and Quality}.
\newblock Hcup national inpatient sample, 2019.
\newblock Accessed on 06/01/2023.

\bibitem[AGH{\etalchar{+}}03]{averill2003all}
Richard Averill, Norbert Goldfield, Jack Hughes, Janice Bonazelli, Elizabeth
  McCullough, Barbara Steinbeck, Robert Mullin, and Ana Tang.
\newblock {\em All Patient Refined Diagnosis Related Groups (APR-DRGs)}.
\newblock 3M Health Information Systems, 2003.

\bibitem[AM19]{akbarzadeh2019restless}
Nima Akbarzadeh and Aditya Mahajan.
\newblock Restless bandits with controlled restarts: Indexability and
  computation of {W}hittle index.
\newblock In {\em 2019 IEEE 58th Conference on Decision and Control (CDC)},
  pages 7294--7300. IEEE, 2019.

\bibitem[AM22]{akbarzadeh2022conditions}
Nima Akbarzadeh and Aditya Mahajan.
\newblock Conditions for indexability of restless bandits and an algorithm to
  compute {W}hittle index.
\newblock {\em Advances in Applied Probability}, 54(4):1164--1192, 2022.

\bibitem[BAVT21]{biswas2021learn}
Arpita Biswas, Gaurav Aggarwal, Pradeep Varakantham, and Milind Tambe.
\newblock Learn to intervene: An adaptive learning policy for restless bandits
  in application to preventive healthcare.
\newblock {\em arXiv preprint arXiv:2105.07965}, 2021.

\bibitem[Ber24]{bernstein1924modification}
Sergei Bernstein.
\newblock On a modification of chebyshev’s inequality and of the error
  formula of {L}aplace.
\newblock {\em Ann. Sci. Inst. Sav. Ukraine, Sect. Math}, 1(4):38--49, 1924.

\bibitem[DKX{\etalchar{+}}23]{diaz2023flexible}
Paula~Rodriguez Diaz, Jackson~A Killian, Lily Xu, Arun~Sai Suggala, Aparna
  Taneja, and Milind Tambe.
\newblock Flexible budgets in restless bandits: a primal-dual algorithm for
  efficient budget allocation.
\newblock In {\em Proceedings of the AAAI Conference on Artificial
  Intelligence}, volume~37, pages 12103--12111, 2023.

\bibitem[ENS18]{ene2018approximation}
Alina Ene, Viswanath Nagarajan, and Rishi Saket.
\newblock {Approximation Algorithms for Stochastic k-TSP}.
\newblock In Satya Lokam and R.~Ramanujam, editors, {\em 37th IARCS Annual
  Conference on Foundations of Software Technology and Theoretical Computer
  Science (FSTTCS 2017)}, volume~93 of {\em Leibniz International Proceedings
  in Informatics (LIPIcs)}, pages 27:27--27:14, Dagstuhl, Germany, 2018.
  Schloss Dagstuhl--Leibniz-Zentrum fuer Informatik.

\bibitem[FNMT19]{fu2019towards}
Jing Fu, Yoni Nazarathy, Sarat Moka, and Peter~G Taylor.
\newblock Towards q-learning the {W}hittle index for restless bandits.
\newblock In {\em 2019 Australian \& New Zealand Control Conference (ANZCC)},
  pages 249--254. IEEE, 2019.

\bibitem[GGN22]{ghuge2022non}
Rohan Ghuge, Anupam Gupta, and Viswanath Nagarajan.
\newblock Non-adaptive stochastic score classification and explainable
  halfspace evaluation.
\newblock In {\em International Conference on Integer Programming and
  Combinatorial Optimization}, pages 277--290. Springer, 2022.

\bibitem[GGY23]{gast2023exponential}
Nicolas Gast, Bruno Gaujal, and Chen Yan.
\newblock Exponential asymptotic optimality of {W}hittle index policy.
\newblock {\em Queueing Syst. Theory Appl.}, 104(1–2):107–150, may 2023.

\bibitem[Hoe63]{hoeffding1963probability}
W~Hoeffding.
\newblock Probability inequalities for sums of bounded random variables [j].
\newblock {\em Journal of the American Statistical Association},
  58(301):13--30, 1963.

\bibitem[Hsu18]{hsu2018age}
Yu-Pin Hsu.
\newblock Age of information: {W}hittle index for scheduling stochastic
  arrivals.
\newblock In {\em 2018 IEEE International Symposium on Information Theory
  (ISIT)}, pages 2634--2638. IEEE, 2018.

\bibitem[ISS12]{iannello2012optimality}
Fabio Iannello, Osvaldo Simeone, and Umberto Spagnolini.
\newblock Optimality of myopic scheduling and whittle indexability for energy
  harvesting sensors.
\newblock In {\em 2012 46th Annual Conference on Information Sciences and
  Systems (CISS)}, pages 1--6. IEEE, 2012.

\bibitem[JLLS20]{jiang2020algorithms}
Haotian Jiang, Jian Li, Daogao Liu, and Sahil Singla.
\newblock {Algorithms and Adaptivity Gaps for Stochastic k-TSP}.
\newblock In Thomas Vidick, editor, {\em 11th Innovations in Theoretical
  Computer Science Conference (ITCS 2020)}, volume 151 of {\em Leibniz
  International Proceedings in Informatics (LIPIcs)}, pages 45:1--45:25,
  Dagstuhl, Germany, 2020. Schloss Dagstuhl--Leibniz-Zentrum fuer Informatik.

\bibitem[KBST21]{killian2021q}
Jackson~A Killian, Arpita Biswas, Sanket Shah, and Milind Tambe.
\newblock Q-learning lagrange policies for multi-action restless bandits.
\newblock In {\em Proceedings of the 27th ACM SIGKDD Conference on Knowledge
  Discovery \& Data Mining}, pages 871--881, 2021.

\bibitem[KPT21]{killian2021beyond}
Jackson~A Killian, Andrew Perrault, and Milind Tambe.
\newblock Beyond" to act or not to act": Fast lagrangian approaches to general
  multi-action restless bandits.
\newblock In {\em Proceedings of the 20th International Conference on
  Autonomous Agents and MultiAgent Systems}, pages 710--718, 2021.

\bibitem[LNDF08]{le2008multi}
Jerome Le~Ny, Munther Dahleh, and Eric Feron.
\newblock Multi-uav dynamic routing with partial observations using restless
  bandit allocation indices.
\newblock In {\em 2008 American Control Conference}, pages 4220--4225. IEEE,
  2008.

\bibitem[LZ10]{liu2010indexability}
Keqin Liu and Qing Zhao.
\newblock Indexability of restless bandit problems and optimality of {W}hittle
  index for dynamic multichannel access.
\newblock {\em IEEE Transactions on Information Theory}, 56(11):5547--5567,
  2010.

\bibitem[MKX{\etalchar{+}}20]{mate2020collapsing}
Aditya Mate, Jackson Killian, Haifeng Xu, Andrew Perrault, and Milind Tambe.
\newblock Collapsing bandits and their application to public health
  intervention.
\newblock {\em Advances in Neural Information Processing Systems},
  33:15639--15650, 2020.

\bibitem[MMG18]{meshram2018whittle}
Rahul Meshram, D~Manjunath, and Aditya Gopalan.
\newblock On the {W}hittle index for restless multiarmed hidden markov bandits.
\newblock {\em IEEE Transactions on Automatic Control}, 63(9):3046--3053, 2018.

\bibitem[MMT{\etalchar{+}}22]{mate2022field}
Aditya Mate, Lovish Madaan, Aparna Taneja, Neha Madhiwalla, Shresth Verma,
  Gargi Singh, Aparna Hegde, Pradeep Varakantham, and Milind Tambe.
\newblock Field study in deploying restless multi-armed bandits: Assisting
  non-profits in improving maternal and child health.
\newblock In {\em Proceedings of the AAAI Conference on Artificial
  Intelligence}, volume~36, pages 12017--12025, 2022.

\bibitem[MPT21]{mate2021risk}
Aditya Mate, Andrew Perrault, and Milind Tambe.
\newblock Risk-aware interventions in public health: Planning with restless
  multi-armed bandits.
\newblock In {\em AAMAS}, pages 880--888, 2021.

\bibitem[NGH{\etalchar{+}}21]{nakhleh2021neurwin}
Khaled Nakhleh, Santosh Ganji, Ping-Chun Hsieh, I~Hou, Srinivas Shakkottai,
  et~al.
\newblock Neurwin: Neural {W}hittle index network for restless bandits via deep
  rl.
\newblock {\em Advances in Neural Information Processing Systems}, 34:828--839,
  2021.

\bibitem[PT94]{papadimitriou1994complexity}
Christos~H Papadimitriou and John~N Tsitsiklis.
\newblock The complexity of optimal queueing network control.
\newblock In {\em Proceedings of IEEE 9th Annual Conference on Structure in
  Complexity Theory}, pages 318--322. IEEE, 1994.

\bibitem[QZKT16]{qian2016restless}
Yundi Qian, Chao Zhang, Bhaskar Krishnamachari, and Milind Tambe.
\newblock Restless poachers: Handling exploration-exploitation tradeoffs in
  security domains.
\newblock In {\em Proceedings of the 2016 International Conference on
  Autonomous Agents \& Multiagent Systems}, pages 123--131, 2016.

\bibitem[RBAA22]{robledo2022qwi}
Francisco Robledo, Vivek Borkar, Urtzi Ayesta, and Konstantin Avrachenkov.
\newblock Qwi: Q-learning with {W}hittle index.
\newblock {\em ACM SIGMETRICS Performance Evaluation Review}, 49(2):47--50,
  2022.

\bibitem[TM19]{tripathi2019whittle}
Vishrant Tripathi and Eytan Modiano.
\newblock A {W}hittle index approach to minimizing functions of age of
  information.
\newblock In {\em 2019 57th Annual Allerton Conference on Communication,
  Control, and Computing (Allerton)}, pages 1160--1167. IEEE, 2019.

\bibitem[Whi88]{whittle1988restless}
Peter Whittle.
\newblock Restless bandits: Activity allocation in a changing world.
\newblock {\em Journal of Applied Probability}, 25(A):287--298, 1988.

\bibitem[WW90]{weber1990index}
Richard~R Weber and Gideon Weiss.
\newblock On an index policy for restless bandits.
\newblock {\em Journal of Applied Probability}, 27(3):637--648, 1990.

\bibitem[WX15]{wan2015weighted}
Peng-Jun Wan and Xiaohua Xu.
\newblock Weighted restless bandit and its applications.
\newblock In {\em 2015 IEEE 35th International Conference on Distributed
  Computing Systems}, pages 507--516. IEEE, 2015.

\bibitem[WXTT23]{wang2023optimistic}
Kai Wang, Lily Xu, Aparna Taneja, and Milind Tambe.
\newblock Optimistic whittle index policy: Online learning for restless
  bandits.
\newblock In {\em Proceedings of the AAAI Conference on Artificial
  Intelligence}, volume~37, pages 10131--10139, 2023.

\end{thebibliography}

\appendix

\clearpage
\section{Proof of PSPACE-hardness}

\pspace*

\begin{proof}[Proof of Theorem \ref{thm:pspace}]
Let $M$ be a 1-tape Turing Machine that on any input $x$, runs in space $p(|x|)$, for some polynomial $p$ (where $|x|$ is the length of $x$).  We reduce the problem of determining whether $M$ halts on a given input $x$ to the problem of determining an approximately optimal policy for a given RMAB minimization instance.

The reduction is as follows.
Given $x$, let $n=p(|x|)$,
the maximum number of tape cells that will be used by $M$ when run on input $x$.  We assume without loss of generality that the tape of $M$ consists of $n$ cells, numbered 1 through $n$, that $x$ appears at the start off the tape, and that the head of $M$ only accesses cells 1 through $n$ when run on input $x$.

Let 
$\Gamma$ be the tape alphabet of $M$,
$\Sigma \subseteq \Gamma$ be the tape alphabet,
$Q$ be the set of states of $M$, 
$q_0 \in Q$ be the initial state of $M$,
$q_{accept}, q_{reject} \in Q$ be the halting states,
and $\delta \colon Q \times \Gamma
\rightarrow Q \times \Gamma \times \{R,L\}$.

Let $T = |Q| n^{|\Gamma|+2}$, which is an upper bound on the number of steps that $M$ could perform on input $x$ if $M$ halts on $x$. 

We now describe the RMAB minimization instance.
We choose $\beta$ so that the discounted costs will be between 1/2 and 1 for each step of the RMAB.
In particular, we set $\beta = \exp(\log(.5)/(T+5\alpha^2))$.
The  problem is to find a strategy for the RMAB that gives an $(\alpha, \rho)$-approximation to the minimization problem with reward threshold $R$.

We will describe the instance of the RMAB minimization problem 
for which solving the approximate minimization problem
requires determining whether Turing Machine $M$ halts in $T$ steps.
Interestingly, the MDPs in the RMAB are deterministic.

In the RMAB, there are $n+1$ MDPs, a  ``special'' MDP, and $n$ ``cell'' MDPs, one for each cell of the tape of $M$.

The operation of the RMAB can be described has having two phases: first a warm up phase, and then a simulation phase, in which the cell MDPs can simulate the operation of Turing Machine $M$ on input $x$.

To perform the simulation of the Turing Machine, each cell MDP has states that can keep track of 
the current Turing machine state (if the head is at that cell, or about to move to that cell), the current symbol contained in the tape cell,  
and whether or not the Turing Machine's head is currently at that cell.  

During the warm up phase of the RMAB, there are a series of deterministic transitions between the cell MDP states, with zero cost and reward, that end in the proper initialization of the cell MDPs to represent the Turing machine's initial tape contents, head position, and state.

For MDP $i\in [n]$ at time step $t \in [T]$ of the Turing Machine's simulation,
the state of the MDP is represented by a 7-tuple:
\begin{align}\label{eq:reduction_state}
\left(i, \text{TMst}_i^{(t)}, \text{symbol}_i^{(t)}, 
\text{current}_i^{(t)}, \text{next}_i^{(t)}, j^{(t)}, k^{(t)} \right).
\end{align}
When clear from context, we drop the subscript and superscript.
Here $\text{TMst} \in Q \cup \{0\}$
represents the current state of the Turing Machine if the head is currently at the cell (or about to move there), or 0 otherwise.  Additionally,
$\text{symbol} \in \Gamma$ represents the current symbol in the cell,
$\text{current} \in \{ \texttt{True}, \texttt{False} \}$ represents
whether the cell currently has the head, and
$\text{next} \in [n]$ represents the cell that the head will move to next.
(This value is irrelevant when the cell does not contain the head.)
The indices $j \in [n]$ and $k \in [2|Q|]$ are artifacts of the reduction
used for passing the head and its current state between cells.
We can think of these indices as a central `clock' which allows for communication between MDPs \textit{even though} they are independent. This approach was used in proving PSPACE hardness of the RMAB maximization problem \cite{papadimitriou1994complexity}.

After the warm up phase described below, the indices $j$ and $k$ of the cell MDPs are initialized at 0.
At each time step of the RMAB during the simulation phase, 
each cell MDP updates indices $j$ and $k$ as follows: $$j^{(t)} = j^{(t-1)} + 1 \mod n$$ and, if $j^{(t-1)} = 0$,
$$k^{(t)} = k^{(t-1)} + 1 \mod 2 |Q|+1.$$
In this way, the indices $j$ are incremented $n$ times for each time
$k$ is incremented.
The remaining values stay fixed unless otherwise noted.

If the simulation ends (the Turing machine halts), the MDP corresponding to the cell containing the head of the Turing Machine enters a trapping state that incurs cost $1$ and provides reward $R$.

Before describing the simulation phase in more detail, we first describe the warm up phase.  It has $2\alpha$ time steps.  Both the cell MDPs and the special MDP use their states to keep track of the number of time steps spent so far in this phase.  The $n$ cell MDPs incur zero cost and earn zero reward throughout the steps of the warm up phase, no matter whether or not they are played.  

At the start of the warm up phase, the RMAB policy needs to decide whether or not to play the special MDP.  If it does, the special MDP enters a series of states 
where for the next $2\alpha$ steps, it incurs unit cost and reward $R$ if it is played, and zero cost and zero reward if it is not played.  Beyond that, it enters a trapping stage with zero cost and zero reward.

If the special MDP is not played at the first step, it enters a series of $2\alpha$ states where it receives reward $R$ at zero cost whether or not it is played.
After the $2\alpha$ states, the special MDP then transitions to a trapping state that incurs zero cost and receives zero reward.

Since the constraint in the RMAB minimization problem requires at least $R$ reward to be received in every time step,
if the RMAB policy plays the special MDP at the first step of the initialization phase, it must play the
 special MDP throughout the phase, since that is the only way to satisfy the constraint.  
 Hence by the choice of $\beta$, 
 if the special MDP is played in the first step,
 the discounted cost incurred in the initialization phase is between $\alpha$ and $2 \alpha$, and no further cost needs to be incurred.
 

If the special MDP is not played in the first step, then the only way to receive reward following the initialization phase (and to satisfy the reward constraint) is to simulate the Turing Machine by playing the cell MDPs.
We describe below how the simulation works, but at this point what is relevant is simply that during the simulation, no cost is incurred.  If and when the Turing Machine halts, it must do so within $T$ steps, the RMAB simulatiom will end, and the RMAB will incur a (discounted) cost of at least 
$2\alpha^2$.
Therefore the optimal policy is to play the special MDP if and only if the Turing machine halts on input $x$.
Every other strategy is at least a multiplicative factor of $\alpha$ worse than the optimal strategy.
Then we have that any $(\alpha, \rho)$-approximation to the optimal strategy requires determining whether the Turing machine halts in $T$ steps.
Since the RMAB is deterministic, the statement holds even for $\rho$ arbitrarily close to 0.

We now describe the transition, reward, and cost functions for the cell MDPs that ensure that the Turing Machine is simulated properly during the simulation phase.

Each step of the Turing machine is simulated by a full round of updates
to the indices $j$ and $k$ (that is, $n (2 |Q|+1)$ steps of the MDPs).
The difficulty lies in correctly copying the information about the Turing Machine state from the cell MDP for current head position
\textit{only} to the cell MDP for the next head position.
The index $j$ corresponds to the possible current position of the head (and hence which cell MDP should be copied), while the index $k$ corresponds to the possible current state of the Turing Machine.
The simulation of a single step of the Turing Machine corresponds to multiple steps of the RMAB, consisting of a
transition phase, copy phase, and a validation phase.

In the transition phase, the cell currently with the head is wiped and the next cell with the head is initialized.
The initialization includes computing the new state of the Turing Machine, the new symbol, and the next head position.

In what follows, we give pseudocode 
describing the operation of 
cell MDP $i$, corresponding to the $i$th cell of the Turing Machine tape.

Each time step of the RMAB corresponds to a single value of $j$ and $k$.
The cell RMABs all have the same values for $j$ and $k$ at each time step,
so that there are essentially $n$ copies of each phase running in parallel, one for each cell MDP.

For most values of $j$ and $k$, the reward for a cell MDP is the same whether or not the MDP is played.  (The cost is always 0.)  In this case, we simply give the reward value.  If the reward is received only if the MDP is played, we indicate that explicitly
(otherwise, assume there is no reward).
We use $H=\{q_{accept},q_{reject}\}$
to denote the halting states.

Recall that the RMAB minimization problem constraint requires that a reward of at least $R$ in total be obtained from the MDPs in each time step.  Thus in a given time step, if one MDP receives a negative reward, another MDP must receive a positive reward to compensate.

\begin{algorithm}[h] 
    \caption*{Transition Phase of Turing Machine Reduction}
    \label{alg:clean_phase}
    \begin{algorithmic}[1]
    \IF{$j=0 \land k = 0$}
        \IF[Remove current head]{$\text{current}_i$} 
            \STATE $\text{current}_i \gets \texttt{False}$
            \STATE $\text{TMst}_i \gets 0$
        \ELSIF[Add next head]{$\text{TMst}_i \neq 0$}
            \STATE $\text{current}_i \gets \texttt{True}$
            \STATE $\text{TMst}_i, \text{symbol}_i, \text{next}_i \gets \delta(\text{TMst}_i, \text{symbol}_i)$
            \IF[Halting state]{$\text{TMst}_i \in H$} 
                \STATE Enter trapping state \COMMENT{Pay cost $\geq 2\alpha^2$}
            \ENDIF
        \ENDIF
        \STATE Receive reward $R$
    \ENDIF
    \end{algorithmic}
\end{algorithm}

In the copy phase, we copy the state of the Turing Machine to the MDP for the next cell that will contain the head.
The MDP for the current cell receives $R$ reward unless the indices align with the next head and the correct state of the Turing Machine.
Therefore, the correct next cell must be selected and the correct state copied.
Unfortunately, other cells or other states could be selected and copied so we need a validation phase.

\begin{algorithm}[h]
    \caption*{Copy Phase of Turing Machine Reduction}
    \label{alg:copy_phase}
    \begin{algorithmic}[1]
    \IF{$k \in \{1,\ldots,|Q|\}$}
        \IF{$\text{current}_i \land \text{not} (k = \text{TMst}_i \land \text{next}_i=j)$}
            \STATE Receive reward $R$
            \STATE $\triangleright$ Receive reward except when head should be copied
        \ELSIF{not $\text{current}_i \land (\text{played} \land i=j)$}
            \STATE $\text{TMst}_i \gets k$ \COMMENT{Copy TMstate}
            \STATE Receive reward $R$            
        \ENDIF 
    \ENDIF
    \end{algorithmic}
\end{algorithm}

In the validation phase, we ensure that the head and state information was only copied to the correct cell MDP.
If it is copied to a cell MDP,
then that MDP receives $-R$ reward
on the indices $j$ and $k$ corresponding to its index and stored Turing Machine state.
The only way to still satisfy the constraint is if the policy receives $2R$ reward during that time step, which only happens if the indices $j$ and $k$ align with the correct next head and Turing Machine state.
Therefore, the only way to satisfy the reward constraint in every step
of the copy and validation phases is if only the correct head and Turing Machine state are copied.

Therefore, we have reduced the problem of determining whether the Turing Machine $M$ halts on a given input $x$ to the problem of determining an approximately optimal policy for a given RMAB minimization instance.

\begin{algorithm}[h]
    \caption*{Validation Phase of Turing Machine Reduction}
    \label{alg:valid_phase}
    \begin{algorithmic}[1]
    \IF{$k \in \{|Q|+1, \ldots, 2|Q|\}$}
        \IF{not $\text{current}_i \land (i=j \land k= Q+\text{TMst}_i)$}
            \STATE Receive reward $-R$ \COMMENT{Penalize wrong cell}
        \ELSIF{$\text{current}_i \land (k = Q+\text{TMst}_i \land \text{next}_i=j)$}
            \STATE Receive reward $2R$ \COMMENT{Help correct cell}
        \ELSIF{$\text{current}_i$}
            \STATE Receive reward $R$
        \ENDIF       
    \ENDIF
    \end{algorithmic}
\end{algorithm}

\end{proof}

\section{Proof of Corollaries}

\indexability*

\begin{proof}[Proof of Corollary \ref{coro:indexability}]
We first observe that 
$V_{\max}(s,a,\lambda) = \lambda \cdot V_{\min}(s,a,1/\lambda)$.
Now suppose indexability holds for the decoupled maximization problem.
Consider a state $s \in \mathcal{S}$ and real numbers $0 < \lambda' \leq \lambda$.
By the indexability assumption, we have
\begin{align}
    V_{\max}(s,0,1/\lambda) &> V_{\max}(s,1,1/\lambda) \nonumber \\
    \implies
    V_{\max}(s,0,1/\lambda') &> V_{\max}(s,1,1/\lambda')
\end{align}
since $1/\lambda' \geq 1/\lambda$.
By the observation above, this is equivalent to
\begin{align}
    V_{\min}(s,0,\lambda) &> V_{\min}(s,1,\lambda) \nonumber \\
    \implies
    V_{\min}(s,0,\lambda') &> V_{\min}(s,1,\lambda').
\end{align}
By a similar argument, we can show that if indexability holds
for the decoupled minimization problem then it also holds
for the decoupled maximization problem.
The statement follows.
\end{proof}

\index*

\begin{proof}[Proof of Corollary \ref{coro:index}]
Let $\lambda' \leq \lambda^+ \leq \lambda''$.
By indexability and the definition of the Whittle index,
we have 
\begin{align*}
  V_{\max}(s,0,\lambda') &\leq V_{\max}(s,1,\lambda') \\
  V_{\max}(s,0,\lambda'') &> V_{\max}(s,1,\lambda'').
\end{align*}
By the observation relating the decoupled maximization
and minimization problems, this is equivalent to
\begin{align*}
  V_{\min}(s,0,1/\lambda') &\leq V_{\min}(s,1,1/\lambda') \\
  V_{\min}(s,0,1/\lambda'') &> V_{\min}(s,1,1/\lambda'').
\end{align*}
Since the inequality holds for all $1/\lambda' \geq 1/\lambda^+ \geq 1/\lambda''$,
we have that $1/\lambda^+ = \inf \{\lambda: V_{\min}(s,0,\lambda) > V_{\min}(s,1,\lambda)\}$.
By a similar argument, we can show that if the minimization
index is $\lambda^-$ then the Whittle index is $1/\lambda^-$.
The statement follows.
\end{proof}

\section{Data Description}

\subsection*{National Inpatient Sample}

The first data set consists of real hospital visits from the National Inpatient Sample \cite{hcup2018national}.
Each action corresponds to selecting a patient for elective care.
We build the problem by sampling $n$ anonymized patients from the hospital data.
The cost of a patient's elective care is the total amount they were charged for a hospital visit.
The reward of a patient's elective care is how much they needed care as measured by the severity of their condition \cite{averill2003all}.
In order to make the reward function stochastic, we draw the severity of a patient's condition from a distribution over similar patients.
For each patient, we compute the probability similar patients are admitted for emergency care.
For each week a patient does not receive elective care, we compound the probability they need emergency care.
With this probability, the patient needs emergency care and the total reward from the week is \textit{decreased} by the severity of their condition.

Since the problem is based on real data and does not have an analytical form, the Whittle indices cannot be computed in closed form.
Instead, we approximate the Whittle indices using Q-learning \cite{robledo2022qwi}.
Since a tabular approach is infeasible given the number of patients in the data set,
we train a shallow neural network to learn the Q-value.
In our notation, the Q-value is $V_{\min}(\cdot)$, of each patient and action.
In particular, we train the neural network $f_\theta$ with parameters $\theta$ to minimize the loss
$$\mathcal{L}(\theta) = (f_\theta(s,a) - (r(s,a) - c(a) + \beta \max_{a'} f_\theta(s',a') ))^2$$
over instances from simulations of states $s$, selected actions $a$, stochastic next states $s'$, and next actions $a'$.
Then we approximate the Whittle indices by the value of $\lambda$ that would set $f_\theta(s,0) = f_\theta(s,1) + \lambda c(1)$.
We make all of our code and model weights available in the supplementary material.

Our use of the real data set---the National Inpatient Sample---is governed by a data sharing agreement with the Healthcare Cost and Utilization Project \cite{hcup2018national} which prevents us from sharing the data.
However, researchers may apply to access the data by following the process and training outlined on the HCUP website.

\subsection*{Hidden Two-State MDP}

The second data set consists of synthetic data from a well-studied two-state hidden MDP problem \cite{le2008multi,liu2010indexability}.
We choose this synthetic data set because we can compute the Whittle index in closed form and compare the algorithms when the rewards are well-spread.
There are are $n$ MDPs, each in state $0$ or state $1$.
The $i$th MDP transitions from state $x$ to $y$ with probability 
$p^{xy}_i \in (0,1)$ for $x,y \in \{0,1\}$.
The state of an MDP is hidden unless we select it;
if we select an MDP \textit{and} the MDP is in state $1$,
we receive positive reward $r_i \in \mathbb{R}_{>0}$.
The challenge is that we want a strategy to be \textit{exploiting} MDPs that are likely in the 1 state while \textit{exploring} new MDPs to learn if they are in the 1 state.
We consider the problem with unit costs because we are not aware of results for the Whittle index in the more general non-negative cost setting.
We make the first group of reliable MDPs have reward $r_i=1$ and long-term expected reward $1$.
We make the second group of unreliable MDPs have reward $r_i\approx 10n^2$ and long-term expected reward $2$.
We make all our code and data available in the supplementary material.

\section{Maximization Overview}
It is computationally hard to solve the maximization problem exactly in general so prior work considers an approximate version of the problem \cite{whittle1988restless}.

\begin{problem}[Approximate Maximization]
Consider an approximation factor $\alpha \geq 1$.
A policy $\pi$ is an $\alpha$-approximation to the maximization
problem if the expected discounted reward
(Equation \ref{eq:max_objective})
is within a factor of $\alpha$ of the optimal policy $\pi^+$
and the budget constraint is satisfied.
\end{problem}

Since evaluating the optimal strategy for the maximization problem
is PSPACE-hard \cite{papadimitriou1994complexity},
the classical approach is to make a series of relaxations
to get a decoupled problem for each MDP.
When we consider a single MDP $i$, we typically drop the subscript
for notational brevity.

\begin{problem}[Decoupled Maximization \cite{whittle1988restless}]
Consider a particular MDP in the RMAB instance.
Fix $\lambda \geq 0$.
The decoupled maximization problem for a particular MDP 
with parameter $\lambda$ is to find the policy
\begin{align}
    \arg \max_{\pi \colon \mathcal{S} \rightarrow \mathcal{A}}
    \E_{\{a^{(t)}, s^{(t)}\}_{t=1}^\infty} \left[ \sum_{t=1}^\infty \beta^{t-1}
    (r(s^{(t)}, a^{(t)}) - \lambda c(a^{(t)}))
    \right].
\end{align}
\end{problem}

Before we define indexability, we will introduce notation
to describe the expected value of taking an action
in a state and then following the optimal policy.
Let
\begin{align*}
    V_{\max}(s,a,\lambda) &= \E
    \left[ r(s, a) - \lambda c(a) \right] \\
    &+ \E_{s' \sim \tau(s,a)}
    \left[\beta \max_{a'} V_{\max}(s', a', \lambda) \right].
\end{align*}

\begin{definition}[Indexability for Maximization]
An MDP is indexable if for all states $s \in \mathcal{S}$
and real numbers $\lambda' \geq \lambda$,
\begin{align}
    V_{\max}(s,0,\lambda) &> V_{\max}(s,1,\lambda) \nonumber \\
    \implies
    V_{\max}(s,0,\lambda') &> V_{\max}(s,1,\lambda').
\end{align}
In words, if it is optimal to take the passive action under
a subsidy $\lambda$ in state $s$, 
then it must also be optimal to take the passive
action under a larger subsidy $\lambda'$ in state $s$.
\end{definition}

\end{document}